\newcommand\CC{\ensuremath{\mathbb C}\xspace}
\newcommand\ZZ{\ensuremath{\mathbb Z}\xspace}
\newcommand\QQ{\ensuremath{\mathbb Q}\xspace}
\newcommand\FF{\ensuremath{\mathbb F}\xspace}
\newcommand\FFp{\ensuremath{\mathbb F_p}\xspace}
\newcommand\KK{\ensuremath{\mathbb K}\xspace}
\newcommand\hn{\ensuremath{\lang{HN}}\xspace}
\newcommand\hhn{\ensuremath{\lang{H}_\lang{2}\lang N}\xspace}
\newcommand\hhnsq{\ensuremath{\lang{H}_\lang{2}\lang N^\square}\xspace}
\title{The Multivariate Resultant is $\NP$-hard\\in any Characteristic}
\author{Bruno Grenet \and Pascal Koiran \and Natacha Portier\thanks{A part of this work was done during visits to the Fields Institute and 
the University of Toronto. It was partially funded by the Fields Institute and the European Community (7th PCRD Contract: PIOF-GA-2009-236197).}}
\institute{%
    LIP\thanks{UMR 5668 \'Ecole Normale Sup\'erieure de Lyon -- CNRS -- UCBL -- INRIA.}, %
    \'Ecole Normale Supérieure de Lyon, Université de Lyon\\%
    and Department of Computer Science, University of Toronto %
    \email{\{Bruno.Grenet,Pascal.Koiran,Natacha.Portier\}@ens-lyon.fr}}
\date{}
\titlerunning{The Multivariate Resultant is $\NP$-hard in any Characteristic}
\authorrunning{B. Grenet, P. Koiran, N. Portier}
\begin{document}

\maketitle

\vspace{3em}

\begin{center}
\framebox[1.1\width]{Rapport de Recherche RRLIP2009-34}
\end{center}

\vspace{2em}

\abstract{The multivariate resultant is a fundamental tool of computational algebraic geometry. It can in particular be used to decide whether a system of $n$ homogeneous equations in $n$ variables is satisfiable (the resultant is a polynomial in the system's coefficients which vanishes if and only if the system is satisfiable). In this paper we present several $\NP$-hardness results for testing whether a multivariate resultant vanishes, or equivalently for deciding whether a square system of homogeneous equations is satisfiable. 
Our main result is that testing the resultant for zero is $\NP$-hard under deterministic reductions in any characteristic, for systems of low-degree polynomials
with coefficients in the ground field (rather than in an extension).
We also observe that in characteristic zero, 
this problem is in the Arthur-Merlin class $\AM$ if the generalized Riemann hypothesis holds true. In positive characteristic, the best upper bound remains $\PSPACE$.}

\section{Introduction} 

Given two univariate polynomials, their Sylvester matrix is a matrix built on the coefficients of the polynomials which is singular iff the polynomials
have a common root. The determinant of the Sylvester matrix is known as the resultant of the polynomials. 
This determinant is easy to compute since the size of the Sylvester matrix 
is the sum of the degrees of the polynomials.
The study of the possible generalizations to multivariate systems comes within the scope
of the theory of elimination \cite{vanderWaerden,Mac02,Dix08,Macaulay,Stu91,EM99}. 
This theory proves that the only case where a unique polynomial can testify to the existence
of a common root to the system is the case of $n$ homogeneous polynomials in $n$ variables: 
the resultant of a square system of homogeneous polynomials $f_1,\dots,f_n\in\KK[x_1,\dots,x_n]$ is a polynomial in the indeterminate coefficients of 
$f_1,\dots,f_n$ which vanishes iff $f_1,\dots,f_n$ have a nonzero common root
in the algebraic closure of $\KK$. 
The resultant of such a system is known as the \emph{multivariate resultant} in the literature. 
This captures the case of two univariate polynomials \emph{modulo} their homogenization.
Furthermore, in many cases a system of more than $n$ homogeneous polynomials in $n$ variables can be
reduced to a system of $n$ homogeneous polynomials, so that the square case is an important one. This result is sometimes known as Bertini's theorem
(as explained toward the end of this section, we will use an effective version
of this result in one of our $\NP$-hardness proofs).
In this paper, we focus on the multivariate resultant which we simply refer to as the resultant.

The resultant has been extensively used to solve polynomial systems \cite{Laz81,Ren89,CKY89,CD05} and for the elimination of quantifiers in algebraically
or real-closed fields \cite{Sei54,Ier89}.
More recently, the multivariate resultant has been of interest in pure and applied domains. For instance, the problem of robot motion planning is
closely related to the multivariate resultant \cite{Can87,Can88-robot,CR87}, and more generally the multivariate resultant is used in real algebraic 
geometry \cite{Can88-pspace,KSY94}. 
Finally, in the domain of symbolic computation progress has been made for finding explicit formulations for the resultant 
\cite{KS95,Can89,DD01,BD04,CD05,JS07}, see also \cite{KK08}. 

\begin{definition}
Let \KK be a field and $f_1,\dots,f_n$ be $n$ homogeneous polynomials in $\KK[x_1,\dots,x_n]$, $f_i(\bar x)=\sum_{|\alpha|=d_i}\gamma_{i,\alpha}x^\alpha$. 
The \emph{multivariate resultant} $R$ of $f_1,\dots,f_n$ is an irreducible polynomial in $\KK[\overline{\gamma_{i,\alpha}}]$ such that 
\begin{equation} R(\overline{\gamma_{i,\alpha}})=0 \iff \exists\bar x\in\bar\KK, f_1(\bar x)=\cdots=f_n(\bar x)=0. \end{equation}
The multivariate resultant is unique up to a constant factor.
\end{definition}

The problem we are interested in is testing the resultant for zero. This is the same as deciding whether a square system of homogeneous polynomials 
(that is $n$ polynomials in $n$ variables) has a non-trivial root. This is closely related to 
the decision problem problem for the existential theory of an algebraically closed field. This problem is sometimes called the 
\emph{Hilbert Nullstellensatz} problem:

\begin{definition}
Let \KK be a field and $\bar\KK$ be an algebraic closure of \KK. The \emph{Hilbert Nullstellensatz} problem over \KK, $\hn(\KK)$, is the following:
Given a system $f$ of $s$ polynomials in $\KK[x_0,\dots,x_n]$, does there exist a root of $f$ in $\bar\KK^{n+1}$?

Let us now assume  that the $s$ components of $f$ are homogeneous polynomials. Then the \emph{homogeneous} Hilbert Nullstellensatz problem over \KK, $\hhn(\KK)$, is to
decide whether a \emph{non trivial} (that is, nonzero) root exists in $\bar\KK$.

If $f$ is supposed to contain as many homogeneous polynomials as variables, the problem is called the \emph{square} homogeneous Hilbert Nullstellensatz 
over \KK, $\hhnsq(\KK)$.
\end{definition}

In the case of the field $\mathbb Q$, it is more natural to have coefficients in \ZZ. We shall use the notations \hn, \hhn and \hhnsq for this case where
the system is made of integer polynomials. In the sequel, for any prime number $p$, the finite field with $p$ elements is denoted by $\FF_p$. The notation
is extended to characteristic zero, and $\FF_0=\mathbb Q$.

In the case of polynomials with coefficients in \ZZ, Canny \cite{Can88-robot} gave in 1987 a $\PSPACE$ algorithm to compute the resultant. 
To the authors' knowledge, this is the best known upper bound. 
In this paper we show that testing the resultant for zero is $\NP$-hard in
any characteristic.
In other words, $\hhn(\KK)$ is $\NP$-hard for any field $\KK$.

\subsection*{Main Results and Proof Techniques}

In Section~\ref{char0} we observe that for polynomials with integer coefficients, testing the resultant for zero is a problem in the Arthur-Merlin 
($\AM$) class. This result assumes the generalized Riemann hypothesis, and follows
from a simple reduction to the Hilbert Nullstellensatz. For this problem, membership in 
$\AM$ assuming GRH was established in~\cite{Koi96}.
The remainder of the paper is devoted to hardness results.

In characteristic zero, it seems to be a ``folklore'' result that testing the resultant for zero is $\NP$-hard. 
We give a proof of this fact in Section~\ref{char0} since we have not been able to find one in the literature.
In fact, we give two proofs of two results of incomparable strength.
The first proof is based on a reduction from the  \lang{Partition}
problem  \cite[problem SP12]{GareyJohnson}. The second proof is based on a result of
Plaisted~\cite{Pla84} and shows that the problem remains $\NP$-hard for systems
of only two homogeneous polynomials. For the latter result to be true,
we need to use a sparse encoding for our two polynomials (their degree
can therefore be exponential in the input size).

The first proof does not carry over to positive characteristic since
the $\NP$-hardness of \lang{Partition} relies in an essential way on the fact
that the data are integers (in fact, in any finite field the analogue
problem can be solved in polynomial time by dynamic programming).

Plaisted's result can be adapted to positive characteristic~\cite{GKS96,KK05}
but this requires randomization. By contrast, our ultimate goal is $\NP$-hardness
for deterministic reductions and low degree polynomials.
We therefore need to use different techniques.
Our starting point is a fairly standard encoding of $\lang{3-SAT}$
by systems of polynomial equations. Using this encoding we show at the beginning of Section~\ref{arbitrary} that deciding the existence of a nontrivial
solution to a system of homogeneous equations is $\NP$-hard in any characteristic.
The resulting system has in general more equations than variables.
In order to obtain a square system 
two basic strategies can be explored:
\begin{itemize}
\item[(i)] Decrease the number of equations.
\item[(ii)] Increase the number of variables.
\end{itemize}
In Section~\ref{sec:rand} we give a randomized 
$\NP$-hardness result based on the first
strategy. The idea is to replace the initial system by
a random linear combinations of the system's equations 
(the fact this does not change
the solution set is sometimes called a ``Bertini's theorem'').

In Section~\ref{arbitrary} we use the second strategy to 
obtain two $\NP$-hardness results for deterministic reductions.
The main difficulty is to make sure that the introduction of new variables
does not create spurious solutions 
(we do not want to turn an unsatisfiable system into a satisfiable system).
Our solution to this problem can be viewed as a derandomization result.
Indeed, it can be shown that the coefficients of the monomials where 
the new variables occur could be chosen at random.
It would be interesting to find out whether the proof based on the first
strategy can also be derandomized.





\section{Complexity of the Resultant in Characteristic~$0$}
\label{char0}

In this section we show that  testing the resultant for zero is reducible to $\hn(\KK)$. In the case 
$\KK = \ZZ$, this allows us
to conclude (under the Generalized Riemann Hypothesis) 
that our problem is in the polynomial hierarchy, and more precisely in 
the Arthur-Merlin class.
In fact, we show that this applies more generally to the satisfiability problem
for homogeneous systems (recall that testing the resultant for zero 
corresponds to the square case).

\begin{proposition}\label{prop:in-am}
For any field \KK, the problem $\hhn(\KK)$ is polynomial-time many-one reducible to $\hn(\KK)$.
\end{proposition}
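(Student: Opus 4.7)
The plan is to exhibit an explicit polynomial-time many-one reduction: given a homogeneous system $f_1,\dots,f_s \in \KK[x_0,\dots,x_n]$ as input to $\hhn(\KK)$, I will construct a (non-homogeneous) system which is satisfiable in $\bar\KK$ iff the original admits a non-trivial zero. The only obstacle in $\hhn(\KK)$ compared to $\hn(\KK)$ is that the origin is always a root, so the whole content of the reduction is to force at least one coordinate to be nonzero by a single extra equation. Crucially we must do this with a many-one (not Turing) reduction, so trying the $n+1$ affine charts $x_i=1$ separately is not enough.

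The trick I would use is to introduce $n+1$ fresh auxiliary variables $y_0,\dots,y_n$ and form the system
\begin{equation*}
f_1(x_0,\dots,x_n) = \cdots = f_s(x_0,\dots,x_n) = 0, \qquad \sum_{i=0}^n x_i y_i - 1 = 0,
\end{equation*}
viewed as an instance of $\hn(\KK)$ in the $2n+2$ variables $x_0,\dots,x_n,y_0,\dots,y_n$. This is clearly computable in polynomial time from the input.

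For correctness, I would argue both implications directly. If the original homogeneous system has a non-trivial root $(a_0,\dots,a_n) \in \bar\KK^{n+1}\setminus\{0\}$, pick an index $j$ with $a_j\neq 0$ and set $y_j = a_j^{-1}$, $y_i=0$ for $i\neq j$; then $(a,y)$ satisfies the enlarged system. Conversely, any solution $(a,b) \in \bar\KK^{2n+2}$ of the enlarged system satisfies $\sum a_i b_i = 1 \neq 0$, so $(a_0,\dots,a_n)$ is necessarily nonzero and is a non-trivial root of $f_1,\dots,f_s$.

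There is no real obstacle here; the main point to check is simply that the extra equation $\sum x_i y_i = 1$ precisely encodes ``$(x_0,\dots,x_n)\neq 0$'' over the algebraic closure, which is the standard Rabinowitsch-style device (and is what forbids a homogeneous such equation, since any nonzero homogeneous polynomial has non-trivial zeros). Since the reduction blows up the number of variables only by a factor of two and adds a single equation of degree two, it is polynomial time, as required.
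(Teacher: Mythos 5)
Your reduction is exactly the one in the paper: adjoin fresh variables $y_i$ and the single equation $\sum_i x_iy_i-1=0$, with the same two-way correctness argument (setting $y_j=a_j^{-1}$ at a nonzero coordinate, and conversely reading off that some $a_i\neq 0$). The proof is correct and essentially identical to the paper's, up to indexing conventions.
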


\begin{proof}
Consider an instance $\mathcal S$ of $\hhn(\KK)$, that is $s$ homogeneous polynomials 
$f_1,\dots,f_s\in\KK[x_1,\dots,x_n]$. The polynomials $f_1,\dots,f_s$ can be viewed as elements of $\KK[x_1,\dots,x_n,y_1,\dots,y_n]$ 
where $y_1,\dots,y_n$ are new variables which do not appear in the $f_i$. Let $\mathcal T$ be the system containing all the
$f_i$ and the new (non-homogeneous) polynomial $\sum_{i=1}^n x_iy_i-1$. This is an instance of the problem $\hn(\KK)$. It remains to prove 
that $\mathcal S$ and $\mathcal T$ are equivalent.

Given a root $(a_1,\dots,a_n,b_1,\dots,b_n)$ of $\mathcal T$, the new polynomial ensures that there is at least one nonzero $a_i$. 
So $(a_1,\dots,a_n)$ is a non trivial root of 
$\mathcal S$. Conversely, suppose that $\mathcal S$ has a non trivial root $(a_1,\dots,a_n)$, and let $i$ be such that $a_i\ne 0$. Then the tuple
$(a_1,\dots,a_n,0,\dots,0,a_i^{-1},0,\dots,0)$ where $a_i^{-1}$ corresponds to the variable $y_i$ is a root of $\mathcal T$.

Thus $\hhn(\KK)$ is polynomial-time many-one reducible to $\hn(\KK)$.
\end{proof}

Koiran \cite{Koi96} proved that $\hn\in\AM$ under the Generalized Riemann Hypothesis.
We denote here by $\AM$ the Arthur-Merlin class, defined by \emph{interactive proofs with public coins} (see \cite{AroraBarak}).
Thereby,

\begin{corollary}\label{corr:in-am}
Under the Generalized Riemann Hypothesis, \hhn is in the class $\AM$.
\end{corollary}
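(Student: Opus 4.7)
The corollary follows almost immediately by chaining the two ingredients that have just been assembled. My plan is the following.

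First, I would invoke Proposition~\ref{prop:in-am} in the particular case $\KK = \QQ$ (equivalently, integer coefficients, which is what the notation $\hhn$ refers to in the paper). This gives a polynomial-time many-one reduction from $\hhn$ to $\hn$. The reduction is explicit: add auxiliary variables $y_1,\dots,y_n$ together with the single inhomogeneous equation $\sum_i x_i y_i - 1 = 0$, which has polynomial size in the input.

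Second, I would quote Koiran's theorem from~\cite{Koi96}: under the Generalized Riemann Hypothesis, $\hn \in \AM$. To conclude, I need the (standard) fact that $\AM$ is closed under polynomial-time many-one reductions; equivalently, the composition of a polynomial-time many-one reduction with an $\AM$ protocol is still an $\AM$ protocol (the verifier simply runs the reduction first, then simulates the $\AM$ protocol on the reduced instance, using the same coin tosses and the same single Merlin message). Therefore $\hhn \in \AM$ under GRH.

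Since both steps are off-the-shelf, there is no real obstacle; the only minor point to mention clearly is the closure of $\AM$ under many-one reductions, for readers who might wonder whether Koiran's result applies verbatim to the reduced instance. The proof will therefore be essentially a one-paragraph combination of Proposition~\ref{prop:in-am} with the cited result of~\cite{Koi96}.
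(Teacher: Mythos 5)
Your proposal is correct and matches the paper's (implicit) argument exactly: the corollary is obtained by combining Proposition~\ref{prop:in-am} for integer coefficients with Koiran's result that \hn is in $\AM$ under GRH, using closure of $\AM$ under polynomial-time many-one reductions. Your extra remark spelling out that closure property is a harmless elaboration of the same argument.
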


In positive characteristic, the best upper bound on the complexity of the Hilbert Nullstellensatz known to this day remains  $\PSPACE$ (in particular
it is not known whether the problem lies in the polynomial hierarchy,
even assuming some plausible number-theoretic conjecture such as the generalized Riemann hypothesis).



We now give our first $\NP$-hardness result, for the satisfiability of square
systems  of homogeneous polynomial equations. As explained in the introduction,
this seems to be a ``folklore'' result. We give the (short) proof since 
 finding an explicit statement (and proof) of this result in the literature
appears to be difficult.
The second part of the theorem shows that the problems remains $\NP$-hard even
for systems with small integer coefficients (i.e., coefficients bounded by $2$).
This is achieved by a standard trick: we introduce new variables in order to ``simulate'' large integers coefficients. 
It is interesting to note, however, that
a similar trick for reducing degrees does not seem to apply 
to the resultant problem (more on this after Theorem~\ref{plaisted}).
\begin{theorem}\label{thm:folklore}

The problem \hhnsq of deciding whether a square system of homogeneous polynomials with coefficients in \ZZ has a non trivial root is $\NP$-hard.

The problem remains $\NP$-hard even if no polynomial has degree greater that $2$ and even if the coefficients are bounded by $2$.
\end{theorem}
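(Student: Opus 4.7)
The plan is to reduce from the \lang{Partition} problem: given positive integers $a_1, \ldots, a_n$, decide whether there exist signs $\epsilon_i \in \{-1, +1\}$ with $\sum_i \epsilon_i a_i = 0$. To handle this in the homogeneous setting, I would introduce a fresh variable $x_0$ in addition to $x_1, \ldots, x_n$, and encode each sign choice by the degree-$2$ equation $x_i^2 - x_0^2 = 0$. Adding the single linear equation $\sum_i a_i x_i = 0$ yields a system of $n+1$ homogeneous equations in $n+1$ variables, which is therefore square. On the semantic side, any nontrivial solution must satisfy $x_0 \ne 0$, since otherwise the quadratic equations force all $x_i = 0$ and the solution is trivial; scaling so that $x_0 = 1$ then gives $x_i = \pm 1$, and the linear equation becomes exactly a \lang{Partition} witness. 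Conversely, a \lang{Partition} witness immediately yields a nontrivial integer solution.

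For the stronger, second claim, the obstacle is that the coefficients $a_i$ may be exponentially large in the input bit length, so they cannot be placed directly into the system as written. The plan is to simulate the coefficient $a_i$ using repeated doubling. Writing $a_i = \sum_{k=0}^{K_i} b_{i,k} 2^k$ in binary with $b_{i,k} \in \{0,1\}$, I would introduce new variables $y_{i,k}$ meant to represent $2^k x_i$, constrained by the linear equations $y_{i,1} - 2 x_i = 0$ and $y_{i,k} - 2 y_{i,k-1} = 0$ for $k \ge 2$, each having coefficients in $\{-2, 1\}$. The original linear combination $\sum_i a_i x_i$ is then replaced by $\sum_i \bigl(b_{i,0}\, x_i + \sum_{k \ge 1} b_{i,k}\, y_{i,k}\bigr)$, whose coefficients lie in $\{0,1\}$.

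The main obstacle I expect is the bookkeeping needed to preserve the square shape together with the degree and coefficient bounds under the binary expansion. The key observation is that each new variable $y_{i,k}$ is paired with exactly one new doubling equation, so the equation count is automatically maintained; all new equations are linear, so the degree bound $2$ survives; and every coefficient has absolute value at most $2$. Correctness is unchanged: a nontrivial solution still forces $x_0 \ne 0$ via the equations $x_i^2 = x_0^2$ (if $x_0 = 0$ then each $x_i = 0$, and the doubling equations then cascade to $y_{i,k} = 0$), the doubling equations propagate $y_{i,k} = 2^k x_i$, and after scaling $x_0 = 1$ the replacement equation becomes $\sum_i a_i x_i = 0$, as desired.
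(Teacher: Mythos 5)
Your proof is correct and takes essentially the same approach as the paper: a reduction from \emph{Partition} via the square system $x_i^2-x_0^2=0$, $\sum_i a_i x_i=0$, followed by binary expansion with one new doubling variable per new (linear) equation to keep the system square while bounding degrees and coefficients by $2$. The only difference is cosmetic: the paper's auxiliary variables $W_{ij}$ are forced to be multiples of $x_0$ via a Horner-style recurrence and the weight equation becomes the quadratic $\sum_i W_{i,0}x_i$, whereas your $y_{i,k}=2^k x_i$ keep the weight equation linear with $0/1$ coefficients.
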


\begin{proof} 
The reduction is done from \lang{Partition} which is known to be $\NP$-hard \cite[problem SP12]{GareyJohnson}: Given a finite set $S$ and a non negative 
integer
weight $w(s)$ for each $s\in S$, the problem is to decide the existence of subset $S'$ such that $\sum_{s\in S'} w(s)=\sum_{s\notin S'} w(s)$. That is, the aim 
is to cut $A$ into two subsets of same weights.

Given such an instance of \lang{Partition} where $S=\{s_1,\dots,s_n\}$, let us define a system of polynomials. For $1\le i\le n$, $f_i(\bar x)=x_0^2-x_i^2$. 
And \begin{equation} f_0(x_0,\dots,x_n)=w(s_1)x_1+w(s_2)x_2+\cdots+w(s_n)x_n. \end{equation}
A tuple $(a_0,\dots,a_n)$ has to verify $a_i=\pm a_0$ for each $i$ to be a solution, hence the only case to consider is $a_0=1$ and $a_i=\pm1$ for $i\ge 1$.
Then it is clear that the system has a non trivial solution iff $S$ may be split into two subsets of equal weights.

For the second part of the theorem, it remains to show that the coefficients in the system can be bounded by $2$. As the $w(s_i)$ may be big integers, 
they have to be replaced by variables. Let us write $w(s_i)=\sum_{j=0}^p w_{ij}2^j$. 
For each $w_{ij}$, a new variable $W_{ij}$ is introduced. 
For every $i$, the values
of the $W_{ij}$ are defined by a descending recurrence:
\begin{equation} \left\{\begin{array}{l@{\ }l@{\ }lll}
W_{ip}&-&w_{ip}x_0&=&0\\
W_{ij}&-&(2W_{i,j+1}+w_{ij}x_0)&=&0\quad\text{for all $j<p$}
\end{array}\right. \end{equation}
These equalities imply that for every $i,j$ we have $W_{ij}=\sum_{l=j}^p w_{ij} 2^{j-l} x_0$.
Then $f_0$ is replaced by $W_{1,0}x_1+W_{2,0}x_2+\cdots+W_{n,0}x_n$. 
Doing so, the number of polynomials remains the same as the number of variables. 
Hence, this algorithm build a new homogeneous system where the polynomials have their coefficients bounded 
by $2$ and their degrees too. One can readily check that the new system has a non trivial solution iff the original one has. In particular, 
if $x_0$ is set to zero, then all other variables have to be zero too.
\qed\end{proof} 

A related result is Plaisted's \cite{Pla84} on the $\NP$-hardness of deciding whether the gcd of two sparse univariate polynomials has degree 
greater than one. By homogenization of the polynomials, this is the same problem as in Theorem \ref{thm:folklore} for only two bivariate polynomials. 
Note that the polynomials are sparse and can be of very high degree since
exponents are written in binary (this polynomial representation is sometimes called ``supersparse''~\cite{KK05}).
If both polynomials were dense, 
the resultant could be computed in polynomial time
since it is equal to the determinant of their Sylvester matrix.
Plaisted's theorem stated in the same language as Theorem
\ref{thm:folklore} is the following:

\begin{theorem} \label{plaisted}
Given two sparse homogeneous polynomials in $\ZZ[x,y]$, it is $\NP$-hard to decide whether they share a common root in $\CC^2$.
\end{theorem}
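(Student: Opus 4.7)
The strategy is a direct reduction from Plaisted's original univariate result~\cite{Pla84}: given two sparse univariate polynomials $f, g \in \ZZ[t]$ whose exponents are written in binary, it is \NP-hard to decide whether $\gcd(f,g)$ is nonconstant, equivalently whether $f$ and $g$ share a common root in \CC. One converts such an instance into a bivariate homogeneous instance by plain homogenization. Setting $d_1 = \deg f$ and $d_2 = \deg g$, define $F(x,y) = y^{d_1} f(x/y)$ and $G(x,y) = y^{d_2} g(x/y)$. Both are homogeneous polynomials in $\ZZ[x,y]$, and their sparse representations contain exactly the same number of monomials as those of $f$ and $g$; although the degrees $d_1$ and $d_2$ may be exponential in the input length, only their binary encodings appear in the output, so the reduction runs in polynomial time.

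The equivalence of instances rests on the classical correspondence between affine and projective roots. Every common root $\alpha \in \CC$ of $f$ and $g$ yields the nontrivial common root $(\alpha, 1)$ of $F$ and $G$, and conversely, given any nontrivial common root $(a, b) \in \CC^2$ with $b \neq 0$, the quotient $a/b$ is a common complex root of $f$ and $g$.

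The delicate point---which I would flag as the main technical step---is to rule out spurious \emph{roots at infinity}, that is, nontrivial common roots of the form $(a, 0)$ with $a \neq 0$. Up to scaling such a root is $(1, 0)$, and $F(1, 0)$ equals the coefficient of $x^{d_1}$ in $F$, which is precisely the leading coefficient of $f$ as a polynomial in $t$. By the definition $d_1 = \deg f$ this coefficient is nonzero, so $(1, 0)$ is not a root of $F$; the same argument applies to $G$. Homogenizing each polynomial to its own exact degree therefore automatically avoids the only possible source of spurious solutions. Consequently the nontrivial common roots of $F$ and $G$ in $\CC^2$ are, up to scaling, in bijection with the common complex roots of $f$ and $g$, and Plaisted's theorem transfers directly to the bivariate homogeneous setting.
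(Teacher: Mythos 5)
Your proof is correct and follows essentially the same route as the paper: invoke Plaisted's univariate hardness result for sparse (supersparse) polynomials and homogenize each polynomial with a second variable, the reduction being polynomial-time because exponents are kept in binary. Your explicit verification that homogenizing to the exact degree creates no spurious roots at infinity (since $F(1,0)$ is the nonzero leading coefficient of $f$) is a detail the paper leaves implicit, and it is a worthwhile addition.
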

We briefly sketch Plaisted's reduction since it will help understand
the discussion at the end of this section. For a full proof (including 
a correctness proof), see \cite[Theorem 5.1]{Pla84}.
\begin{proof}[sketch] 
The idea is to turn a \lang{3-SAT} instance into a system of two univariate polynomials which share a common root iff the 3-CNF formula 
is satisfiable.

To every variable $X_j$ is associated a prime $p_j$, and $M=\prod p_j$ is defined where the product ranges over all variables that appear in the
formula. A formula $\phi$ is turned into a polynomial $P_\phi$ according to the following rules. A non negated variable $X_j$ is turned into 
$P_{X_j}(x)=x^{M/p_j}-1$ and a negated variable $\neg X_k$ into $P_{\neg X_k}(x)=1+x^{M/p_k}+\cdots+x^{(p_k-1)M/p_k}$.  
Then a formula $\phi\vee\psi$ is turned into $P_{\phi\vee\psi}=\text{lcm}(P_\phi,P_\psi)$.
A conjunction  $\phi= \bigwedge_i \phi_i$ is turned into the polynomial 
\begin{equation}\label{eq:plaisted}
P_{\phi}(x)=x^M \sum_i P_{\phi_i}(x)P_{\phi_i}(1/x)
\end{equation}
This defines the first polynomial $P$.
The second polynomial is simply $x^M-1$. The proof that those two polynomials share a common root iff $\phi$ is satisfiable is omitted.

To obtain the result in the way we stated it, it is sufficient to homogenize $P(x)$ and $x^M-1$ with the second variable $y$.
\qed\end{proof} 

Theorems~\ref{thm:folklore} and~\ref{plaisted} seem to be incomparable.
In particular, it is not clear how to derive Theorem~\ref{thm:folklore}
from Theorem~\ref{plaisted}. A natural idea would be to introduce new variables
and use the repeated squaring trick to reduce the degrees of the polynomials
occurring in Plaisted's result. However, as we now explain this can lead
to the creation of unwanted roots at infinity in the resulting 
polynomial system.

Assume for instance that we wish to get rid of all occurrences of $x^2$ in a polynomial. One can add a new variable
$x_2$, replace the occurrences of $x^2$ by $x_2$ and add a new polynomial $x_2-x^2$. In order to keep the system homogeneous, the idea is to homogenize
the latter polynomial: $x_0x_2-x^2$. The problem with this technique is that it adds some new roots with all variables but $x_2$ set to $0$, and in 
particular the homogenization variable $x_0$.

To give an explicit example of the problem mentioned above, let us consider the formula
\begin{equation} (X\vee Y)\wedge(\neg X)\wedge(\neg Y). \end{equation}
Let us associate the prime number $2$ to the variable $X$, and $3$ to $Y$ ($M$ in the previous proof is therefore $6$). 
By Plaisted's construction, $X$ is turned into $x^{M/2}-1=x^3-1$ and $Y$ into $x^2-1$. Their negations $\neg X$ and $\neg Y$ are respectively turned
into $1+x^3$ and $1+x^2+x^4$. The disjunction of $X$ and $Y$ is turned into the lcm of $x^3-1$ and $x^2-1$, that is $(x^2-1)(x^2+x+1)$. Finally, we have
to apply formula \eqref{eq:plaisted} with the latter polynomial, $1+x^3$ and $1+x^2+x^4$. Therefore, the two polynomials of Plaisted's construction
are $x^M-1=x^6-1$ and $-x^3+x^4+2x^5+9x^6+2x^7+x^8-x^9$. 
It can be checked that
as expected,
 those two polynomials do not share any common root.

Applying the repeated squaring trick with homogenization on this example
gives the following system where the two first polynomials represent the original ones and the other ones
are new ones:
\begin{equation} \left\{ \begin{array}{rrr}
\multicolumn{3}{r}{-x_3+x_4+2x_5+9x_6+2x_7+x_8-x_9=0}\\
x_6-x_0=0;& x_0x_2-x^2=0; & x_0x_3-x_2x=0\\
x_0x_4-x_2^2=0;& x_0x_5-x_4x=0; & x_0x_6-x_2x_4=0\\
x_0x_7-x_4x_3=0; & x_0x_8-x_4^2=0; & x_0x_9-x_8x=0
\end{array}\right. \end{equation}
But in that example, one can easily check that solutions with $x_0=0$ exist. Namely if we set $x_8$ and $x_9$ to the same nonzero value and all other 
variables to $0$, this defines a solution to the system. 

To the authors' knowledge, there is no solution to avoid these unwanted roots. Furthermore, Plaisted's result works well with fields of characteristic
$0$, but as it uses the fact that a sum of non negative terms is zero iff every term is zero, this generalizes not so well to positive characteristic.
In particular, generalizations to positive characteristic require randomization (see \cite{KK05} and \cite{GKS96}).
By contrast, two of the reductions given in the next section are deterministic
and they yield systems with polynomials of low degree (i.e., of linear or even constant degree).


\section{The resultant is $\NP$-hard in arbitrary characteristic} 
\label{arbitrary}

In this section we give three increasingly stronger $\NP$-hardness results
for testing the resultant.
As explained in the introduction, we first provide in Section~\ref{sec:rand} a $\NP$-hardness proof for  randomized reductions. We then give in Section~\ref{sec:deter} two $\NP$-hardness 
results for deterministic reductions: the first
one applies to systems with coefficients in an extension of the ground field,
and the second (stronger) result to systems with coefficients in the ground field only.
The starting point for these three $\NP$-hardness results is the following lemma.
\begin{lemma}[\cite{Koi00-dimensions}] \label{lemma:h2n}
Given a field \KK of any characteristic, it is $\NP$-hard to decide whether a system of $s$ homogeneous polynomials in $\KK[x_0,\dots,x_n]$ has a 
non trivial root. That is, $\hhn(\KK)$ is $\NP$-hard.
\end{lemma}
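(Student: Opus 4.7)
The plan is to reduce \lang{3-SAT} to $\hhn(\KK)$ by the standard ``Boolean encoding'' of variables via quadratic equations, together with a degree-$3$ clause polynomial. Given a 3-CNF formula $\phi$ on variables $X_1,\dots,X_n$ with clauses $C_1,\dots,C_m$, I will introduce one indeterminate $x_i$ per Boolean variable plus a homogenization variable $x_0$, and write down a homogeneous system over $\KK$ whose non-trivial roots correspond exactly to the satisfying assignments of $\phi$.

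The first group of polynomials consists of the $n$ Boolean constraints $f_i(\bar x) = x_i(x_i - x_0)$ for $i=1,\dots,n$, each homogeneous of degree $2$. Over any field these force $x_i\in\{0,x_0\}$, which in the affine chart $x_0=1$ is exactly the condition $x_i\in\{0,1\}$. For each clause $C = \ell_1\vee\ell_2\vee\ell_3$ I then introduce a single degree-$3$ clause polynomial, namely the product $u(\ell_1)u(\ell_2)u(\ell_3)$, where $u(X_j):=x_0-x_j$ and $u(\neg X_j):=x_j$. By design, the literal $\ell$ is true under the assignment encoded by $(1,x_1,\dots,x_n)$ iff $u(\ell)$ vanishes at that point; so the product vanishes iff at least one literal of $C$ is satisfied. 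The coefficients of all these polynomials lie in $\{-1,0,1\}$, hence in the prime field of $\KK$, so the construction is characteristic-free.

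I will then verify the two directions of equivalence. If $\phi$ has a satisfying assignment $\epsilon\in\{0,1\}^n$, then $(1,\epsilon_1,\dots,\epsilon_n)$ is a non-trivial root of the system. Conversely, suppose $(a_0,\dots,a_n)$ is a non-trivial root. If $a_0=0$, then the Boolean constraints degenerate to $a_i^2=0$, forcing $a_i=0$ for every $i$ in any field, contradicting non-triviality. Hence $a_0\ne 0$, so after rescaling we may assume $a_0=1$; then each $a_i\in\{0,1\}$, and the clause polynomials ensure that the resulting Boolean assignment satisfies every clause. The reduction produces $n+m$ polynomials in $n+1$ variables in polynomial time, establishing $\NP$-hardness.

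The only point requiring real attention is the ``no roots at infinity'' step above, since this is precisely the issue that, as discussed after Theorem~\ref{plaisted}, causes trouble in Plaisted-style reductions; here it is handled for free because the Boolean equations are themselves square-like in each $x_i$. Apart from this, the argument is essentially a verification, and the choice of characteristic plays no role.
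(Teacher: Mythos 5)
Your reduction is correct: the Boolean constraints $x_i(x_i-x_0)$ kill all roots at infinity in any characteristic (at $a_0=0$ they force $a_i^2=0$, hence $a_i=0$), rescaling to $a_0=1$ gives $a_i\in\{0,1\}$, and since $\bar\KK$ is an integral domain the vanishing of each degree-$3$ clause product $u(\ell_1)u(\ell_2)u(\ell_3)$ is equivalent to satisfaction of the clause (an unsatisfied clause evaluates to $1$). The construction is polynomial-size and characteristic-free, so it does establish the lemma.

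Your route differs from the paper's in the choice of source problem and gadgets. The paper reduces from \lang{Boolsys} (equations $X_i=\text{True}$, $X_i=\neg X_j$, $X_i=X_j\vee X_k$) rather than from \lang{3-SAT}, encodes truth by $a_i=\pm a_0$ via $x_0^2-x_i^2$ (with a separate $\{0,a_0\}$ encoding for characteristic $2$), and expresses each Boolsys equation by a single \emph{degree-$2$} polynomial such as $(x_i+x_0)^2-(x_j+x_0)(x_k+x_0)$. Your encoding is arguably cleaner — one uniform $\{0,x_0\}$ convention, no case split on the characteristic — but it pays with degree-$3$ clause polynomials, so the resulting system has mixed degrees $2$ and $3$. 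This matters downstream: the paper explicitly keeps this lemma's systems in a specific all-degree-$2$ form because the later reductions exploit it — the randomized reduction needs all polynomials of equal degree so that generic linear combinations stay homogeneous (Theorem~\ref{thm:variety}), and the deterministic reduction over $\ZZ$ uses the fact that the values $\epsilon_i=f_{n+i}(\bar a)$ lie in $\{-4,0,2,4\}$ for the specific quadratic gadgets when choosing $\lambda=3$. With your version one would have to equalize degrees (e.g.\ multiply the quadratics by $x_0$, or pass to degree $6$ as in the appendix's lcm trick) and redo the small arithmetic of the $\epsilon_i$, which is routine but not free. As a standalone proof of the lemma as stated, however, your argument is complete.
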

In~\cite{Koi00-dimensions},  $\hhn(\KK)$ was proven $\NP$-hard by reduction from
\lang{Boolsys}. An input of \lang{Boolsys} 
is a system of boolean equations in the variables $X_1,\dots,X_n$ where each equation is of the form 
$X_i=\text{True}$, $X_i=\neg X_j$, or $X_i=X_j\vee X_k$. 
The question is the existence of a valid assignment for the system, that is an assignment
of the variables such that each equation is satisfied. This problem is easily shown $\NP$-hard by reduction from $\lang{3-SAT}$.
We now give a proof of this lemma since the specific form of the systems 
that we construct in the reduction will be useful in the sequel.
This proof is a slight variation on the proof from~\cite{Koi00-dimensions}.
\begin{proof} 
Let \KK be a field of any characteristic $p$, $p$ being either zero or a prime number. 
At first, $p$ is supposed to be different from $2$. The proof has to be
slightly changed in the case $p=2$ and this case is explained at the end of the proof.

Let $\mathcal B$ be an instance of \lang{Boolsys}. Let us define a system of homogeneous polynomials from this instance with the property that 
$\mathcal B$ is satisfiable iff the polynomial system has a non trivial common root. The variables in the system are $x_0,\dots,x_n$ where $x_i$, 
$1\le i\le n$, corresponds to the boolean variable $X_i$ in \lang{Boolsys}, and $x_0$ is a new variable. The system contains four kinds of polynomials:
\begin{itemize}
\item $x_0^2-x_i^2$, for each $i>0$;
\item $x_0\cdot(x_i+x_0)$, for each equation $X_i=\text{True}$ in \lang{Boolsys};
\item $x_0\cdot(x_i+x_j)$, for each equation $X_i=\neg X_j$;
\item $(x_i+x_0)^2-(x_j+x_0)\cdot(x_k+x_0)$, for each equation $X_i=X_j\vee X_k$.
\end{itemize}

Let us denote by $f$ the polynomial system obtained from $\mathcal B$. The first kind of polynomials
ensures that if $(a_0,\dots,a_n)$ is a non trivial root of $f$, then $a_0^2=a_1^2=\cdots=a_n^2$. Now if $f$ has a non trivial root $(a_0,\dots,a_n)$, 
then one can readily check that the assignment $X_i=\text{True}$ if $a_i=-a_0$ and $X_i=\text{false}$ if $a_i=a_0$ satisfies $\mathcal B$. 
Conversely, if there is a valid assignment $X_1,\dots,X_n$ for $\mathcal B$, any $(n+1)$-tuple $(a_0,\dots,a_n)$ where $a_0\neq 0$ and $a_i=-a_0$ 
if $X_i=\text{True}$ and $a_i=a_0$ if $X_i=\text{false}$ is a non trivial root of $f$.

This proof works for any field of characteristic different from $2$. The problem in characteristic $2$ is the implementation of \lang{Boolsys} in terms of
a system of polynomials. Indeed, for the other characteristics, the truth is represented by $-a_0$ and the falseness by $a_0$. In characteristic $2$, those 
values are equal. Yet, one can just change the polynomials and define in the case of characteristic $2$ the following system:
\begin{itemize}
\item $x_0x_i-x_i^2$, for each $i>0$;
\item $x_0(x_i+x_0)$, for each equation $X_i=\text{True}$ in \lang{Boolsys};
\item $x_0(x_i+x_j+x_0)$, for each equation $X_i=\neg X_j$;
\item $x_i^2+x_jx_k+x_0\cdot(x_j+x_k)$, for each equation $X_i=X_j\vee X_k$.
\end{itemize}
Now, given any nonzero value $a_0$ for $x_0$, the truth of a variable $X_i$ is represented by $x_i=a_0$ whence the falseness is represented by $x_i=0$.
A root of the system is in particular a root of the polynomials defined by the first item. Therefore each $x_i$ has to be set either to $a_0$ or to $0$. 
The system has a non trivial root iff the instance of \lang{Boolsys} is satisfiable. 
\qed\end{proof} 

\subsection{A randomized reduction} \label{sec:rand} 

We now give the first of our three $\NP$-completeness results in positive characteristic. The proof also applies to  characteristic zero, but in this case
Theorem~\ref{thm:folklore} is preferable (its proof is simpler and the $\NP$-hardness result stronger since it relies on deterministic reductions).
For more on randomized reductions, see~\cite{AroraBarak}.
\begin{theorem}\label{thm:randomized}
Let $p$ be either zero or a prime number. 
The following problem is $\NP$-hard under randomized reductions:
\begin{itemize}
\item[-] INPUT: a square system of homogeneous equations with coefficients
in a finite extension of  $\FF_p$.
\item[-] QUESTION: is the system satisfiable in the algebraic closure of $\FF_p$?
\end{itemize}
In
the case $p=0$, the results also holds for systems with coefficients in~$\ZZ$.
\end{theorem}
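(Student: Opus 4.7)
The natural starting point is Lemma~\ref{lemma:h2n}: it provides, from a \lang{3-SAT} or \lang{Boolsys} instance, a system of $s$ homogeneous polynomials $f_1,\dots,f_s\in\KK[x_0,\dots,x_n]$ whose nontrivial satisfiability is $\NP$-hard to decide. Inspection of the construction shows that \emph{all} the $f_i$ are of degree exactly two (both in the $p\neq 2$ list and in the $p=2$ list given in the proof of the lemma), so linear combinations of them are again homogeneous of degree two; this uniformity of degree is what makes strategy~(i) from the introduction feasible here.

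From such a starting system I would build a square system by taking $n+1$ random linear combinations
\[ g_j \;=\; \sum_{i=1}^{s} a_{ji}\,f_i \qquad (1\le j\le n+1), \]
where the matrix $A=(a_{ji})$ is drawn uniformly from $\FF_q^{(n+1)s}$ for a suitable finite extension $\FF_q$ of $\FFp$ (in characteristic zero, entries are drawn from $\{0,1,\dots,N-1\}\subset\ZZ$, yielding a system over $\ZZ$ after clearing denominators). The size of $q$, respectively $N$, will be polynomial in the bit-length of the input, but large enough for the Schwartz--Zippel bound below to be exponentially small.

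Correctness has one trivial direction: any nontrivial common zero of the $f_i$ is automatically a nontrivial common zero of the $g_j$. The interesting direction is to show that, with high probability over $A$, the converse holds as well, i.e.\ that the $g_j$ acquire no \emph{spurious} nontrivial common zero when the $f_i$ have none. This is the ``Bertini'' step alluded to in the introduction. The argument I would use is an elementary incidence/dimension count: for any fixed $x\in\mathbb P^n(\bar\KK)$ outside the (empty, by assumption) common zero locus of the $f_i$, at least one $f_i(x)$ is nonzero, so the condition $g_j(x)=0$ cuts out a hyperplane in the $j$-th row of $A$; as $j$ varies these $n+1$ hyperplane conditions lie in disjoint blocks of coordinates and are therefore independent, producing a linear subspace of codimension $n+1$. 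Letting $x$ range over $\mathbb P^n$ (dimension $n$) and projecting to the $A$-space, the union of bad matrices has dimension at most $n+(n+1)s-(n+1)=(n+1)s-1$, hence is contained in a proper hypersurface of $\bar\KK^{(n+1)s}$. Eliminating $x$ by a projective resultant gives an explicit nonzero polynomial in the entries of $A$ vanishing on this bad locus, whose degree is bounded by a B\'ezout-type quantity in $2^{O(n)}$.

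The main obstacle will be this quantitative step: one must turn the dimension count into a concrete degree bound on a polynomial vanishing on the bad set, so that Schwartz--Zippel over $\FF_q$ (or over $\{0,\dots,N-1\}$) yields an error probability of, say, $2^{-n}$ with $\log q$ still polynomial in the input size. The uniform quadratic degree coming from Lemma~\ref{lemma:h2n} keeps this bound manageable. Apart from this, the argument is characteristic-free, since no generic smoothness is invoked, only the linearity of $A\mapsto g_j(x)$ and the elementary dimension count above. Thus the same reasoning handles every characteristic uniformly and gives a polynomial-time randomized many-one reduction from $\hhn(\KK)$ to $\hhnsq(\KK)$, as required.
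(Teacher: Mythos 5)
Your reduction is the same as the paper's (strategy~(i): replace the $s$ equations produced by Lemma~\ref{lemma:h2n} by $n+1$ random linear combinations, exploiting that all those polynomials are homogeneous of degree exactly~$2$), and it is correct; what differs is how you justify that random combinations introduce no spurious projective zeros. The paper expresses the goodness of $\bar\alpha$ by a first-order formula over $\overline{\FF_p}$, invokes the effective quantifier-elimination bounds of \cite{FGM90} to get a quantifier-free equivalent with $2^{P(n,\log(s+n))}$ polynomials of such degree, and uses the Zariski-density theorem proved in the appendix (via \cite{Koi00-circuits}) to extract a single nonzero polynomial $Q$ to avoid; you instead run a direct incidence-variety dimension count (for each $x$ with some $f_i(x)\neq 0$ the conditions on the rows of $A$ are independent linear conditions, so the bad locus has dimension at most $(n+1)s-1$) and then obtain the explicit avoidable polynomial as the multivariate resultant of $g_1,\dots,g_{n+1}$ composed with the linear substitution $A\mapsto(\text{coefficients of }g_j)$, of degree at most $(n+1)2^n$. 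This is essentially the sharpening the paper only mentions for characteristic zero via \cite{KPS01}, but your argument is characteristic-free and more elementary, and it yields a concrete single-exponential degree bound rather than the general-purpose $2^{2P(n,\log(s+n))}$; the price is that you must indeed carry out the quantitative step you flag (the resultant degree bound does it, since the resultant vanishes exactly when the specialized square system has a nontrivial root, and it is not identically zero in $A$ by your dimension count). Two small points to tighten: you need $q$ (not its bit-size) to exceed the $2^{O(n)}$ degree bound, i.e.\ $\log q=\Theta(n)$, which your later sentence gets right; and you should say how the extension $\FF_q$ is produced --- e.g.\ deterministically by Shoup's algorithm \cite{Sho90}, as in the paper, or by random sampling of an irreducible polynomial, which is harmless inside a randomized reduction.
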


\begin{proof} 
Lemma \ref{lemma:h2n} shows that it is $\NP$-hard to decide whether a non square polynomial system $f$ with coefficients in \FFp has a non trivial root.
From $f$, a square system $g$ is built in randomized polynomial time.

Let us denote by $f_j$, $1\le j\le s$, the components of $f$. They are homogeneous polynomials in $\FFp[x_0,\dots,x_n]$. The components of $g$ are 
defined by 
\begin{equation} g_i=\sum_{j=1}^s \alpha_{ij}f_j \end{equation}
for $0\le i\le n$. In the sequel, we explain how to choose the $\alpha_{ij}$ for $f$ and $g$ to be equivalent. 
For any choice of the $\alpha_{ij}$, a root of $f$ is a root of $g$. Thus it is sufficient to show how to choose the $\alpha_{ij}$ so that $g$ has no non
trivial root if the same is true for $f$.

The property the $\alpha_{ij}$ have to verify is expressed by a first-order formula:
\begin{equation} \Phi(\bar\alpha)\equiv\forall x_0\cdots\forall x_n 
    \left(\bigwedge_{j=1}^s f_j(\bar x)=0\right)\vee\left(\bigvee_{i=0}^n\sum_{j=1}^s\alpha_{ij}f_j(\bar x)\neq0\right). \end{equation}
The formula $\Phi$ belongs to the language of the first-order theory of 
the algebraically closed field $\overline{\FFp}$. This theory eliminates quantifiers and 
$\Phi(\bar\alpha)$ is therefore equivalent to a quantifier-free formula of the form
\begin{equation} \Psi(\bar\alpha)\equiv\bigvee_k\left(\bigwedge_{l} P_{kl}(\bar\alpha)=0\wedge\bigwedge_m Q_{km}(\bar\alpha)\neq0\right), \end{equation}
where $P_{kl},Q_{km}\in
\FFp[\bar\alpha]$. As a special case of \cite[Theorem 2]{FGM90}, one can bound the number of polynomials in $\Psi$ as well 
as their degrees by $2^{P(n,\log(s+n))}$ where 
$P$ is a polynomial independent from $\Phi$.

The proof of Theorem \ref{thm:variety} (in Appendix)
shows that   the set $A$ of tuples satisfying $\Phi$ is Zariski-dense in $\overline{\FFp}^{s(n+1)}$.
Since $A$ is dense, and $A$ is also defined by $\Psi$, one of the clauses 
of $\Psi$ must define a Zariski dense subset of $\overline{\FFp}^{s(n+1)}$.
This clause is of the form $\bigwedge_m Q_m(\bar\alpha)\neq 0$.

To satisfy $\Phi$, it is sufficient for the $\alpha_{ij}$ to avoid the roots of a polynomial $Q=\prod_m Q_m$. As mentioned before, it is known that
$\Psi$ contains at most $2^{P(n,\log(s+n))}$ polynomials of degree at most $2^{P(n,\log(s+n))}$. Thus, $Q$ is a polynomial of degree at most 
$2^{2P(n,\log(s+n))}$. Consider now a finite extension \KK of \FFp with at least $2^{2+2P(n,\log(s+n))}$ elements (that is, of polynomial degree). 
If we choose the $\alpha_{ij}$ uniformly at random in \KK, then with probability at least $3/4$ they are not a root of $Q$ (by the Schwartz-Zippel Lemma).
Thus with the same probability, they satisfy $\Phi$. Note that $\KK$ can be built in polynomial-time with Shoup's algorithm \cite{Sho90} 
when $p$ is prime (for $p=0$, we take of course $\KK = \QQ$).

To sum up, we build from $f$ a square system $g$ defined by random linear combinations of the components of $f$. If $f$ has a non trivial root, then it is
a root of $g$ too. Conversely, if $f$ has no non trivial root, then with probability at least $3/4$ it is also the case that $g$ has no nontrivial root.
\qed\end{proof} 

In characteristic zero 
the bounds in the above proof can be sharpened: instead of appealing to the general-purpose quantifier elimination result of~\cite{FGM90} we can use 
a result of \cite{KPS01}. Indeed, it follows from section~4.1 of~\cite{KPS01} 
that there exists
a polynomial $F$ of degree at most $3^{n+1}$ such that $F(\bar\alpha)\neq 0$ implies that $g$ has no non trivial root as soon as it is true for $f$.
This polynomial plays the same role as $Q$ in the previous proof but the bound on its degree is sharper.


\subsection{
Deterministic Reductions} \label{sec:deter} 

We now improve the $\NP$-hardness result of Sect.~\ref{sec:rand}: we show
that the same problem is $\NP$-hard {\em for deterministic reductions}.
This result is not only stronger, but also the proof is more elementary
(there is no appeal to effective quantifier elimination).
\begin{theorem}
Let $p$ be either zero or a prime number.
The following problem is $\NP$-hard under deterministic reductions:
\begin{itemize}
\item[-] INPUT: a square system of homogeneous equations with coefficients
in a finite extension of  $\FF_p$.
\item[-] QUESTION: is the system satisfiable in the algebraic closure of $\FF_p$?
\end{itemize}
In the case $p=0$, the results also holds for systems with coefficients in~$\ZZ$.
\end{theorem}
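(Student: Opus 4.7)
My plan is to reduce from the problem of Lemma~\ref{lemma:h2n}: a system of $s$ homogeneous polynomials $f_1,\dots,f_s$ in $n+1$ variables $x_0,\dots,x_n$ (of degree at most~$2$, with coefficients in $\FFp$, or in $\ZZ$ if $p=0$) whose non trivial satisfiability is $\NP$-hard. If $s\le n+1$ the system is already essentially square (pad with copies of the zero polynomial), so the interesting case is $s>n+1$; set $m=s-(n+1)$.

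Following strategy~(ii) from the introduction, I would introduce $m$ new indeterminates $y_1,\dots,y_m$ and replace each $f_i$ by a new homogeneous polynomial $\tilde f_i(\bar x,\bar y)=f_i(\bar x)+\delta_i(\bar x,\bar y)$ of the same degree, where every monomial of $\delta_i$ involves at least one $y_k$. The number of equations stays at $s$ while the number of variables becomes $(n+1)+m=s$, so the resulting system is square. The easy direction of the equivalence is built in: setting $\bar y=0$ recovers the original system, so any non trivial root $\bar x$ of $f$ yields the non trivial root $(\bar x,0)$ of $\tilde f$.

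The heart of the argument is to prevent spurious non trivial roots of $\tilde f$ when $f$ has none. Two bad configurations must be ruled out: $\bar x=0$ with $\bar y\ne 0$, and $\bar x\ne 0$ with $f(\bar x)\ne 0$ yet $\delta(\bar x,\bar y)=-f(\bar x)$. I would take the $\delta_i$'s linear in $\bar y$, so that $\delta(\bar x,\bar y)=M(\bar x)\,\bar y$ for an explicit $s\times m$ matrix $M(\bar x)$ with polynomial entries, and then require (a) $M(0)$ to have full column rank $m$, which rules out the first bad case, and (b) $f(\bar x)$ to lie outside the column space of $M(\bar x)$ whenever $f(\bar x)\ne 0$, which rules out the second. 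Both conditions can be expressed as the non-vanishing of a single explicit polynomial $R$ in the coefficients of the $\delta_i$'s, with total degree and number of variables polynomial in the input size. A random assignment of those coefficients from a sufficiently large pool would make $R$ nonzero with high probability by Schwartz--Zippel, which is the derandomization target advertised in the introduction.

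To obtain the \emph{deterministic} reduction I would replace the random choice with specific elements of a finite extension $\KK$ of $\FFp$ of polynomial degree, arranged in a Vandermonde-style pattern so that $R$ evaluates to a provably non-zero value via an elementary algebraic identity (e.g.\ an honest Vandermonde determinant, or a resultant analysed in advance). Such a $\KK$ is constructible in polynomial time by Shoup's algorithm~\cite{Sho90}, and in characteristic zero one can take $\KK=\QQ$ and clear denominators to land in $\ZZ$, which gives the second assertion of the theorem. The main obstacle I expect is engineering the monomials of the $\delta_i$ so that the nondegeneracy polynomial $R$ has enough structure to be checked by an explicit identity rather than invoked generically, since, as the discussion after Theorem~\ref{plaisted} illustrates, naive choices readily introduce unwanted roots at infinity.
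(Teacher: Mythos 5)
Your high-level plan (start from Lemma~\ref{lemma:h2n}, add $s-n-1$ new variables to square the system, and choose the perturbation so that no spurious roots appear) is indeed the strategy of the paper, but the concrete mechanism you propose does not work and the crucial step is asserted rather than proved. First, the ``linear in $\bar y$'' choice is inconsistent with homogeneity: the $f_i$ produced by Lemma~\ref{lemma:h2n} are homogeneous of degree $2$, so if $\delta_i(\bar x,\bar y)=\sum_k L_{ik}(\bar x)y_k$ is homogeneous of degree $2$ then each $L_{ik}$ is a linear form in $\bar x$ and $M(0)=0$; your condition (a) is therefore unsatisfiable, and every point $(\bar 0,\bar y)$ with $\bar y\neq 0$ is automatically a non trivial root of the modified system, i.e.\ exactly the roots at infinity the paper warns about after Theorem~\ref{plaisted}. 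Second, the heart of the matter --- that the ``no spurious roots'' requirement is the non-vanishing of a single polynomial $R$ of \emph{polynomially bounded} degree in the perturbation coefficients, and that an explicit Vandermonde-style assignment makes $R\neq 0$ --- is precisely what needs proof and is not supplied. The condition is universally quantified over $\bar x\in\overline{\FFp}^{\,n+1}$; eliminating that quantifier generically (as in the proof of Theorem~\ref{thm:randomized}) yields polynomials of degree exponential in $n$, which is why that route only gives a randomized reduction, and its derandomization is stated in the paper as an open question. Your closing admission that ``engineering the monomials of the $\delta_i$'' is the main obstacle concedes the actual content of the theorem.

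What closes the gap in the paper is a very specific, non-generic perturbation: only the excess equations are modified, by a telescoping chain $f_{n+1}+\lambda y_1^2$, $f_{n+2}-y_1^2+\lambda y_2^2$, \dots, $f_s-y_{s-n-1}^2$, with pure-$y$ monomials (so $\bar x=\bar 0$ forces $\bar y=\bar 0$) and a single parameter $\lambda$. For any root, the first $n$ equations pin the $a_i$ to $\pm a_0$ (or $\{0,a_0\}$ if $p=2$), so after normalizing $a_0=1$ the values $\epsilon_i=f_{n+i}(\bar a)$ lie in the prime field; the $b_i^2$ then satisfy a linear system whose determinant is the explicit polynomial $\epsilon_1+\epsilon_2\lambda+\cdots+\epsilon_{s-n}\lambda^{s-n-1}$ of degree at most $s-n-1$ over $\FFp$. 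Taking $\lambda$ to be a root of an irreducible polynomial of degree $s-n$ over $\FFp$, constructed deterministically by Shoup's algorithm \cite{Sho90} (or simply $\lambda=3$ over $\ZZ$, via uniqueness of base-$3$ representations), makes this determinant nonzero unless all $\epsilon_i$ vanish, which forces $\bar b=\bar 0$ and hence a genuine root of $f$. You would need an argument of this explicit, structure-exploiting kind --- in particular the fact that the possible error values live in $\FFp$ --- to replace your appeal to a generic nondegeneracy polynomial.
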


\begin{proof} 
The proof of Lemma \ref{lemma:h2n} gives a method to implement an
instance of \lang{Boolsys} with a system $f$ of $s$ homogeneous polynomials in $n+1$ variables with coefficients in \FFp. 
It remains to explain how to construct a square system $g$ that has a non trivial root iff $f$ does. 
%
Let us denote by $f_1,\dots,f_s$ the components of $f$, 
with for each $i=1,\dots,n$, $f_i=x_0^2-x_i^2$ if $p\neq 2$ and $f_i=x_0x_i- x_i^2$ if $p=2$. A new system $g$ of $s$ polynomials in $s$ variables
is built. The $s$ variables are $x_0,\dots,x_n$ and $y_1,\dots,y_{s-n-1}$, that is $(s-n-1)$ new variables are added. The system $g$ is the following:
\begin{equation} g(\bar x,\bar y)=\left(\begin{array}{l@{}c@{}l}
f_1(\bar x)&&\\
\quad\vdots&&\\
f_n(\bar x)&&\\
f_{n+1}(\bar x)&&+\lambda y^2_1\\
f_{n+2}(\bar x)&-y^2_1&+\lambda y^2_2\\
&\vdots&\\
f_{n+i}(\bar x)&-y_{i-1}^2&+\lambda y_i^2\\
&\vdots&\\
f_{s-1}(\bar x)&-y^2_{s-n-2}&+\lambda y^2_{s-n-1}\\
f_s(\bar x)&-y^2_{s-n-1}&
\end{array}\right) \end{equation}
The parameter $\lambda$ is to be defined later. Clearly, if $f$ has a non trivial root $\bar a$, then $(\bar a,\bar 0)$ is a non trivial root of $g$. 
Let us now prove that the converse also holds true for some $\lambda$: if $g$ has a non trivial root, then so does $f$. Note that a suitable $\lambda$
has to be found in polynomial time.

Let $(a_0,\dots,a_n,b_1,\dots,b_{s-n-1})$ be any non trivial root of $g$. 
Since $\bar a$ must be a common root of $f_1,\dots,f_n$, we
have $a_0^2=\dots=a_n^2$ if $p\neq 2$, and $a_i \in \{0,a_0\}$ for every $i$ if $p=2$. 
  Now, either $a_0=0$ and $f_i(\bar a)=0$ for every $i$, or $a_0$ can be supposed to equal $1$.
Therefore, if $p\neq 2$ either $\bar a=\bar 0$ or $a_i=\pm 1$ for every $i$, and if $p=2$ either $\bar a=\bar 0$ or $a_i\in \{0,1\}$ for every $i$. 
Let us define $\epsilon_i=f_{n+i}(\bar a)\in\FFp$. As $(\bar a,\bar b)$ is a root of $g$, the $b_i^2$ satisfy the linear system
\begin{equation} \left\{\begin{matrix}
\epsilon_1 && &+&  \lambda Y_1 &=&0,\\
\epsilon_2 &-& Y_1 &+& \lambda Y_2 &=&0,\\
&&&\vdots\\
\epsilon_{s-n-1} &-& Y_{s-n-2} &+& \lambda Y_{s-n-1} &=&0,\\
\epsilon_{s-n} &-& Y_{s-n-1} && &=&0.
\end{matrix}\right. \end{equation}
This system can be homogenized by replacing each $\epsilon_i$ by $\epsilon_i Y_0$ where $Y_0$ is a fresh variable. This
gives a square homogeneous linear system. The determinant of the matrix of this system is equal to $(-1)^{s-n-1}\left(\epsilon_1+\epsilon_2\lambda+
\cdots+\epsilon_{s-n}\lambda^{s-n-1}\right)$.

Let us consider this determinant as a polynomial in $\lambda$. This polynomial vanishes identically iff all the $\epsilon_i$ are zero. In that case, the 
only solutions satisfy $Y_i=0$ for $i>0$, that is $(\bar a,\bar 0)$ is a root of $g$ and therefore $\bar a$ is a root of $f$. 
If some $\epsilon_i$ are nonzero, this is a nonzero polynomial of degree $(s-n-1)$. If $\lambda$ can be chosen such that it is not a root of this polynomial
(for any possible nonzero value of $\bar\epsilon$), then the only solution to the linear system is the trivial one. This means that the only non trivial
root of $g$ is $(\bar a,\bar 0)$ where $\bar a$ is a root of $f$.

If the polynomials have coefficients in \ZZ,
$\lambda=3$ (or any other integer $\lambda>2$) satisfies the condition. Indeed, one can check that $\epsilon_i=
f_{n+i}(\bar a)\in\{-4,0,2,4\}$ when $a_0=1$. The determinant is zero iff $\epsilon'_1+\epsilon'_2\lambda+\cdots+\epsilon'_{s-n}\lambda^{s-n-1}=0$
where $\epsilon'_i=\epsilon_i/2\in\{-2,0,1,2\}$.
For each $i$, let $\epsilon_i^+=\max\{\epsilon'_i,0\}$ and $\epsilon_i^-=\max\{-\epsilon'_i,0\}$. Then $\epsilon'_i=\epsilon_i^+-\epsilon_i^-$, 
and $0\le \epsilon_i^+,\epsilon_i^-\le 2$. Now the determinant is zero iff $\sum_i\epsilon_i^+ 3^i=\sum_i\epsilon_i^- 3^i$. By the
unicity of base-$3$ representation, this means that for all $i$, $\epsilon_i^+=\epsilon_i^-$, and so $\epsilon'_i=0$.

For a field of positive characteristic, this argument cannot be applied. The idea is to find a $\lambda$ that is not a root of any polynomial
of degree $(s-n-1)$. Nothing else can be supposed on the polynomial because if $\FFp=\FF_3$ for example, any polynomial of $\FFp[\lambda]$ can appear.
This also shows that $\lambda$ cannot be found in the ground field. Suppose an extension of degree $(s-n)$ is given as $\FFp[X]/(P)$ where $P$ is an
irreducible degree-$(s-n)$ polynomial with coefficients in $\FFp$. Then a root of $P$ in $\FFp[X]/(P)$ cannot be a root of a degree-$(s-n-1)$
polynomial with coefficients in $\FFp$. Thus, if one can find such a $P$, taking for $\lambda$ the indeterminate $X$ is sufficient. For any fixed
characteristic $p$, Shoup gives a deterministic polynomial-time algorithm \cite{Sho90} that given an integer $N$ outputs a degree-$N$ irreducible 
polynomial $P$ in $\FFp[X]$. Thus, the system $g$ is now a square system of polynomials in $\left(\FFp[X]/(P)\right)[\bar x,\bar y]$ and this system
has a non trivial root iff $f$ has a non trivial root. And Shoup's algorithm allows us to build $g$ in polynomial time from $f$.

For any field \FFp, it has been shown that from an instance $\mathcal B$ of \lang{Boolsys} a square system $g$
of polynomials with coefficients in an extension of \FFp 
(in \ZZ for integer polynomials)
can be built in deterministic polynomial time such that $g$ has a non trivial root iff 
$\mathcal B$ is satisfiable. This shows that the problem is $\NP$-hard.  
%
%
\qed\end{proof} 

The previous result is somewhat unsatisfactory as it requires, in the case of positive characteristic, to work with coefficients in an extension field rather than in the
ground field. A way to get rid of this limitation is now shown. Yet, a property of the previous result is lost. Instead of having constant-degree (even
degree-$2$) polynomials, our next result uses linear-degree polynomials. 
It is not clear whether the same result can be obtained 
for degree-$2$ polynomials (for instance, as explained at the end of Sect.~\ref{char0} reducing the degree by introducing new variables can create unwanted
solutions at infinity).

The basic idea behind Theorem~\ref{groundfield} is quite simple (we put the irreducible polynomial used to build the extension field into the system), but some care is required in order to obtain
an equivalent homogeneous system.

Recall from the introduction that $\hhnsq(\FF_p)$ is the following problem:
\begin{itemize}
\item[-] INPUT: a square system of homogeneous equations with coefficients
in $\FF_p$.
\item[-] QUESTION: is the system satisfiable in the algebraic closure of $\FF_p$?
\end{itemize}
\begin{theorem} \label{groundfield}
For any prime $p$, $\hhnsq(\FF_p)$ is $\NP$-hard 
under deterministic reductions.
\end{theorem}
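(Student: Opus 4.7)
The plan is to take as input the construction from the proof of the preceding theorem, which produces from a \lang{Boolsys} instance a square homogeneous system $g = (g_1, \dots, g_s)$ of degree~$2$ in the $s$ variables $x_0, \dots, x_n, y_1, \dots, y_{s-n-1}$, with coefficients in the extension $\FF_p[X]/(P)$ for an explicit irreducible polynomial $P \in \FF_p[X]$ of degree $s-n$ output by Shoup's algorithm. Following the hint in the theorem's preamble, I would promote the residue class of $X$ to a fresh indeterminate of the system and adjoin $P$ as an additional equation, producing a system of $s+1$ polynomials in $s+1$ variables whose coefficients all lie in $\FF_p$.

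The three concrete steps are: (i) replace the extension parameter $\lambda$ in every $g_i$ by a new variable $X$, so that each $g_i$ becomes a polynomial of total degree at most~$3$ in $\FF_p[x_0, \dots, x_n, y_1, \dots, y_{s-n-1}, X]$, the degree-$3$ monomials being precisely those of the form $X \cdot y_i^2$; (ii) homogenize each $g_i$ to a common total degree by multiplying its degree-$2$ monomials by the homogenization variable $x_0$; (iii) adjoin the homogenization $\widehat P(X, x_0) = x_0^{s-n} P(X/x_0)$, which is homogeneous of degree $s-n$. The resulting system is square, homogeneous, has coefficients in $\FF_p$, and its maximum degree $s-n$ is linear in the input size, matching the ``linear-degree'' caveat stated just before the theorem.

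Correctness then splits into completeness and soundness. Completeness is straightforward: a satisfying \lang{Boolsys} assignment lifts via the preceding theorem to a nontrivial root $(\bar a, \bar 0)$ of $g$ over the extension, and extending this by any root of $P$ in $\overline{\FF_p}$ gives a nontrivial projective solution of the new system. For soundness, a nontrivial solution with $x_0 \neq 0$ can be normalized to $x_0 = 1$; then $\widehat P$ forces $P(X) = 0$, so $X$ specializes to an element of the extension and the remaining equations coincide with the previous theorem's system, whence \lang{Boolsys} is satisfiable.

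The main obstacle --- and the reason ``some care is required'' --- is the case $x_0 = 0$ of the soundness argument. Then $\widehat P(X, 0) = X^{s-n}$ forces $X = 0$, and the first $n$ polynomials force $x_1 = \cdots = x_n = 0$, but with the naive homogenization of step~(ii) the terms $-y_{i-1}^2$ are killed together with $x_0$, so the $y_i$ become entirely unconstrained at $x_0 = X = 0$ and generate spurious nontrivial projective solutions. Overcoming this requires refining step~(ii): the cascade equations $g_{n+1}, \dots, g_s$ must be homogenized in such a way that the iterative argument $y_{s-n-1} = 0,\ y_{s-n-2} = 0, \dots, y_1 = 0$ still survives at $x_0 = X = 0$ --- for instance by homogenizing the cascade using a carefully chosen combination of $x_0$ and $X$, rather than $x_0$ alone, so that at least one nonvanishing $\bar y$-monomial remains in each cascade equation at the troublesome locus. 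The characteristic-$2$ case follows by the same argument, using the variant Boolsys encoding of Lemma~\ref{lemma:h2n}.
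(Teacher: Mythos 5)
You have the right architecture (promote $\lambda$ to a variable, adjoin the homogenization of Shoup's irreducible $P$, rehomogenize the cascade), and you correctly isolate the exact obstacle the paper flags: after naive homogenization by $x_0$, the locus $x_0=X=0$ (where $P(X,x_0)$ and the $f_i$ kill all of $\bar x$ and $X$) leaves the $y_i$ unconstrained and creates spurious projective solutions. But your proposed repair does not work and the actual key idea is missing. Homogenizing the cascade ``using a carefully chosen combination of $x_0$ and $X$ rather than $x_0$ alone'' cannot leave a nonvanishing $\bar y$-monomial at the troublesome locus: any monomial whose degree you raise by multiplying with a form in $x_0$ and $X$ vanishes identically on $\{x_0=X=0\}$, which is precisely the locus you need to control. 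The only way to keep a surviving $\bar y$-monomial there is to make the compensating monomial a \emph{pure power of a $y$-variable}, i.e.\ to raise the exponents of the $y_i$ themselves. This is what the paper does: the $i$-th cascade equation becomes $x_0^{s-n-i}f_{n+i}(\bar x)-y_{i-1}^{s-n-i+2}+\lambda y_i^{s-n-i+1}$, so that the bare term $-y_{i-1}^{s-n-i+2}$ survives at $x_0=X=0$ and the downward induction $y_{s-n-1}=0,\dots,y_1=0$ still goes through (this is also the real source of the ``linear-degree'' caveat: the cascade equations themselves have degree up to $s-n+1$, not just the adjoined $P$).

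There is a second constraint your sketch does not address, which the paper states explicitly as the reason one cannot simply replace $\lambda y_i^2$ by $\lambda y_i$: for the soundness argument at $x_0=1$, every occurrence of a given $y_i$ in the whole system must carry the \emph{same} exponent, so that setting $Y_i=b_i^{s-n-i+1}$ turns the cascade into a linear system whose determinant is a polynomial of degree $s-n-1$ in $\ell$; irreducibility of $P$ (degree $s-n$) then forces all $\epsilon_i=0$. Your step (ii), and any ad hoc rehomogenization of individual monomials, risks breaking this uniform-exponent property, in which case the linearization and the determinant/irreducibility argument --- the heart of the reduction --- no longer apply. So the gap is concrete: the cascading-exponent construction that simultaneously (a) leaves bare $y$-powers alive at $x_0=X=0$ and (b) keeps each $y_i$ at a single fixed power throughout the system is the missing idea, and the direction you indicate instead cannot achieve (a).
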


\begin{proof} 
The idea for this result is to turn coefficient $\lambda$ in the previous proof into a variable and to add the polynomial $P$ as a component of the
system. Of course, considering $\lambda$ as a variable implies that the polynomials are not homogeneous anymore. Thus, it remains to explain how to keep
the system homogeneous.

First, the polynomial $P$ needs to be homogenized. This is done through the variable $x_0$ in the canonical way. As $P(\lambda)$ is irreducible, it is in 
particular not divisible by $\lambda$. Hence, the homogenized polynomial $P(\lambda,x_0)$ contains a monomial $\alpha\lambda^d$ and another one 
$\beta x_0^d$ where $d$ is the degree of $P$. Hence $x_0$ is zero iff $\lambda$ is.

The other polynomials have the form $f_{n+i}(\bar x)-y_{i-1}^2+\lambda y_i^2$. It is impossible to homogenize those polynomials by multiplying $f_{n+i}$
and $y_{i-1}^2$ by $x_0$ (or any other variable) because then the variable $y_{i-1}$ never appears alone in a monomial, and a $s$-tuple with all variables
set to $0$ but $y_{i-1}$ would be a non trivial solution. Moreover, in the previous proof, the fact that the $y_i$ all appear with degree $2$ is used to
consider the system as a linear system in the $y_i^2$. Thus replacing the monomial $\lambda y_i^2$ by $\lambda y_i$ does not work either. 
Instead, we construct the slightly more complicated homogeneous system:
\begin{equation} g_h(\bar x,\bar y,\lambda)=\left(\begin{array}{l@{}c@{}l}
f_1(\bar x)&&\\
\quad\vdots&&\\
f_n(\bar x)&&\\
x_0^{s-n-1}f_{n+1}(\bar x)&&+\lambda y_1^{s-n}\\
x_0^{s-n-2}f_{n+2}(\bar x)&-y_1^{s-n}&+\lambda y_2^{s-n-1}\\
&\vdots&\\
x_0^{s-n-i}f_{n+i}(\bar x)&-y_{i-1}^{s-n-i+2}&+\lambda y_i^{s-n-i+1}\\
&\vdots&\\
x_0f_{s-1}(\bar x)&-y_{s-n-2}^3&+\lambda y_{s-n-1}^2\\
f_s(\bar x)&-y^2_{s-n-1}&\\
P(\lambda,x_0)
\end{array}\right) \end{equation}
Contrary to the previous proof, the $y_i$ do not appear all at the same power. Yet, all the occurrences of each $y_i$ 
have the same degree, and we shall prove that this is sufficient.

Let us prove that if $f$ does not have any non trivial root, then neither does $g_h$. Some of the observations made
for $g$ in the previous proof remain valid. Hence, it is sufficient to prove that a non trivial $(s+1)$-tuple $(\bar a,\bar b,\ell)$ cannot be solution of $g_h$ whenever
$a_0=1$, $\bar b\neq\bar 0$ and $a_0^2=\cdots =a_n^2$ if $p\neq 2$ or $a_i\in \{0,a_0\}$ if $p=2$.
By a previous remark on the polynomial $P$, $\ell$ can also be supposed to be nonzero.

So, similarly as in the previous proof, let us define $\epsilon_i=a_0^{s-n-i}f_{n+i}(\bar a)\in\FFp$. 
In the system $g_h$, the variable $y_i$ only appears at the power 
$(s-n-i+1)$. Therefore, given a value of $\bar a$ and $\ell$, the tuple $(\bar a,\bar b,\ell)$ is a root of $g_h$ iff the $b_i^{s-n-i+1}$ satisfy the 
linear system
\begin{equation} \left\{\begin{matrix}
\epsilon_1 && &+&  \ell Y_1 &=&0\\
\epsilon_2 &-& Y_1 &+& \ell Y_2 &=&0\\
&&&\vdots\\
\epsilon_{s-n-1} &-& Y_{s-n-2} &+& \ell Y_{s-n-1} &=&0\\
\epsilon_{s-n} &-& Y_{s-n-1} && &=&0
\end{matrix}\right. \end{equation}
This is the same system as in the previous proof. Now if $(\ell,1)$ is supposed to be a root of $P$, as $P$ is an irreducible polynomial of degree $(s-n)$,
$\ell$ cannot be a root of a univariate polynomial of degree less than $(s-n)$ with coefficient in $\FFp$. But the determinant of the linear system is such
a polynomial, and thus cannot be zero. This determinant is then $0$ iff all the $\epsilon_i=0$. The same arguments as in the previous proof can be used to
conclude that $(\bar a,\bar b,\ell)$ can be a root of $g_h$ iff $\bar a$ is a root of $f$.

Thus, from an instance $\mathcal B$ of \lang{Boolsys}, a square homogeneous system $g_h$ of polynomials with coefficients in the ground field \FFp is built
in deterministic polynomial time. This system has a non trivial root iff $\mathcal B$ is satisfiable. The result is proved.
%
%
%
\qed\end{proof} 



\section{Final remarks}

In characteristic zero,
the upper and lower bounds on \hhnsq are in a sense close to each other. Indeed, 
$\NP\subseteq\AM\subseteq\mathsf{\Pi}_2\P$, 
that is, $\AM$ lies between the first and the second level of the polynomial hierarchy. 
Furthermore, ``under plausible complexity conjectures, $\AM=\NP$'' \cite[p157]{AroraBarak}.
Improving the $\NP$ lower bound may be challenging as the proof of 
Proposition \ref{prop:in-am} shows that this would imply the same lower bound for  \emph{Hilbert's Nullstellensatz}.

In positive characteristic, the situation is quite different. Indeed, the best known upper bound for \emph{Hilbert's Nullstellensatz} as well as for
the resultant is \PSPACE. 
As in characteristic zero, the known upper and lower bounds are therefore 
the same for 
both problems. 
But as the gap between the $\NP$ lower bound and the $\PSPACE$ upper bound
is rather big, these problems might be of widely different complexity
(more precisely, testing the resultant for zero could in principle be much easier than
deciding whether a general polynomial system is satisfiable).
Canny's algorithm for computing the resultant \cite{Can88-robot} involves the
computation of 
the determinants of 
exponential-size matrices, known as Macaulay matrices, in polynomial space. Those matrices admit a succinct representation
(in the sense of \cite{GW84}). One can prove that computing the determinant
of a general succinctly represented matrix is $\mathsf{FPSPACE}$-complete (and even testing for zero is $\PSPACE$-complete) \cite{Gre09}.
It follows that the $\mathsf{FPSPACE}$ upper bound could be improved only by exploiting
the specific structure of the Macaulay matrices in an essential way, or
by finding an altogether different (non Macaulay-based) 
approach to this problem. As pointed out in Section~\ref{char0}, in characteristic zero  a different approach is indeed possible for {\em testing whether the resultant vanishes} (rather than for computing it).
This problem is wide open in positive characteristic.

Finally, an interesting open question is whether the randomized reduction of Theorem~\ref{thm:randomized} can be derandomized.

\subsubsection{Acknowledgments.}
We thank Bernard Mourrain and Maurice Rojas for sharing their insights
on the complexity of the resultant in characteristic 0.

\appendix
\section{Appendix}

The following result is used in the proof of Theorem~\ref{thm:randomized}.
\begin{theorem}\label{thm:variety}
Let \KK be an algebraically closed field and $V$ an algebraic variety of $\KK^{n+1}$ defined by a set of homogeneous degree-$d$ polynomials 
$f_1,\dots,f_s\in\KK[x_0,\dots,x_n]$. This variety can be defined by $(n+1)$ homogeneous degree-$d$ polynomials $g_1,...,g_{n+1} \in \KK[x_0,\dots,x_n]$.
Moreover, suitable $g_i$'s  can be obtained by taking generic linear combinations of the $f_i$.  That is, we can take $g_i=\sum_{j=1}^s \alpha_{ij}f_j$
where $(\alpha_{ij})$ is a matrix of elements of $\KK$, 
and the set of suitable matrices is Zariski-dense in $\KK^{s(n+1)}$.
\end{theorem}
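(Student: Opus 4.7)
The plan is to prove the theorem by induction on the number of linear combinations chosen so far, showing that each new generic combination strictly decreases the dimension of the extraneous locus (the components not contained in $V$). Combined with the fact that the partial zero sets are cones, this forces the extraneous locus to be empty after $n+1$ steps.

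Set $V_i = V(g_1, \dots, g_i) \subseteq \KK^{n+1}$, with $V_0 = \KK^{n+1}$. Since each $g_k$ lies in the ideal generated by $f_1,\dots,f_s$, the inclusion $V \subseteq V_i$ always holds. The key claim, proved by induction on $i$, is: for generic $(\alpha_{kj})$, every irreducible component of $V_i$ not contained in $V$ has dimension at most $n+1-i$. For the inductive step, let $W$ be such a component of $V_{i-1}$. Then some $f_j$ is not identically zero on $W$, so the linear map $(\beta_1,\dots,\beta_s) \mapsto (\sum_j \beta_j f_j)|_W$ from $\KK^s$ to the coordinate ring of $W$ has a proper kernel $L_W \subsetneq \KK^s$. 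Since $V_{i-1}$ has only finitely many irreducible components, $\bigcup_W L_W$ is a proper Zariski-closed subset of $\KK^s$; choosing $(\alpha_{i1},\dots,\alpha_{is})$ outside it guarantees that $g_i$ does not vanish identically on any such $W$, and hence $W \cap V(g_i)$ has dimension $\dim W - 1 \leq n+1-i$. Every irreducible component of $V_i$ is contained in some component of $V_{i-1}$, giving the desired bound.

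For $i = n+1$ the bound becomes ``dimension at most $0$''. Since the $g_k$ are homogeneous, $V_{n+1}$ is invariant under the connected group $\KK^\times$ acting by scalar multiplication; consequently each irreducible component of $V_{n+1}$ is $\KK^\times$-stable, hence a cone. A nonempty cone other than $\{0\}$ has dimension at least $1$, and $\{0\} \subseteq V$, so no extraneous component can survive. Thus $V_{n+1} = V$.

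The delicate point is density: the genericity condition at step $i$ depends on the previous $\alpha_1, \dots, \alpha_{i-1}$, so the global bad set in $\KK^{s(n+1)}$ is a priori only constructible, not closed. The cleanest resolution is to work at the generic point of $\KK^{s(n+1)}$: interpret the $\alpha_{kj}$ as algebraically independent transcendentals and run the induction over the algebraic closure of $\KK(\alpha_{kj})$, concluding that $V_{n+1} = V$ there. By Chevalley's theorem, the good set is constructible, and since it contains the generic point of the irreducible space $\KK^{s(n+1)}$ it must contain a nonempty Zariski-open subset, hence be Zariski-dense. The cone structure is the crucial ingredient that distinguishes this from a standard Bertini-type density statement: without it one could only cut the extraneous locus down to dimension zero, not empty.
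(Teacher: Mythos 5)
Your proof is correct, but it takes a genuinely different route from the paper. The paper does not redo the dimension induction at all: it invokes Proposition~1 of~\cite{Koi00-circuits} (any variety in $\KK^n$ is cut out by $n+1$ generic linear combinations of its defining equations) as a black box, and reduces the homogeneous statement to it by dehomogenizing twice --- once on the hyperplane $x_0=0$ and once on the chart $x_0=1$, using homogeneity to rescale a point with $a_0\neq 0$ to $a_0=1$ --- and then intersects the two resulting dense sets of good matrices. That is where the paper gains the ``missing'' equation: each chart lives in $n$ variables, so $n+1$ combinations suffice. You instead prove the statement from scratch by a Bertini-style induction (each generic combination drops the dimension of the extraneous locus by one, since a component not inside $V$ admits some $f_j$ not vanishing on it), and you recover the saved equation from the cone structure of the final zero set: a zero-dimensional $\KK^\times$-stable component can only be the origin, which lies in $V$ (implicitly you need $d\geq 1$ here; the degree-$0$ case is trivial anyway). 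Your treatment of density is also more explicit than the paper's: you confront head-on the fact that the genericity condition at step $i$ depends on the earlier rows, and resolve it by working at the generic point of $\KK^{s(n+1)}$ and invoking constructibility (quantifier elimination/Chevalley) to convert ``true generically'' into ``true on a dense open set of $\KK$-rational matrices,'' whereas the paper simply intersects the dense sets handed to it by the cited proposition. In short: the paper's proof is shorter because it outsources both the induction and the density bookkeeping to~\cite{Koi00-circuits}; yours is self-contained, makes visible exactly where homogeneity is used, and in effect inlines (and adapts to the homogeneous setting) the proof of the result the paper cites.
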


\begin{proof}
A similar result is established in  \cite[Proposition 1]{Koi00-circuits} for arbitrary (possibly non-homogeneous) polynomials: Any algebraic variety of 
$\KK^{n}$ can be defined by taking $n+1$ generic linear combinations of the original equations.
In Theorem~\ref{thm:variety} the  polynomials are assumed to be of same degree to ensure that the linear combinations $g_i$ are homogeneous. If the $f_i$ are not of the same degree, the system can be transformed into an equivalent system where 
the degree of all polynomials is equal to the least common multiple  of the degrees of the polynomials in the original system (in the application to 
Theorem~\ref{thm:randomized}, this transformation is not necessary since the input system is made of polynomials of equal degree).

Let  $V_\alpha$ be the variety defined by the $g_i$. 
Clearly, $V\subseteq V_\alpha$ for any matrix $\alpha$.
We need to show that there is a Zariski-dense set of matrices $\alpha$
such that $V_\alpha \subseteq V$. 

Consider a point $(a_0,\dots,a_n)\in \KK^{n+1}$.
If $a_0=0$, then one can define new polynomials $\tilde f_j$ and $\tilde g_i$ 
by setting $\tilde f_j(x_1,\dots,x_n)=f_j(0,x_1,\dots,x_n)$ and 
$\tilde g_i(x_1,\dots,x_n)=g_i(0,x_1,\dots,x_n)$.
The new polynomials satisfy the same linear relations, namely, 
we have $\tilde{g_i}=\sum_{j=1}^s \alpha_{ij}\tilde{f_j}$. 

By \cite[Proposition 1]{Koi00-circuits}, there is a Zariski-dense set of matrices $\alpha$ such that for any tuple  $(a_1,\dots,a_n)$,
if the $\tilde g_i$ vanish on $(a_1,\dots,a_n)$ the same is true of the $\tilde f_j$. 
In this case the
$f_j$ vanish on $(0,a_1,\dots,a_n)$, and therefore $(a_0,\dots,a_n)\in V$. 

It remains to examine the case $a_0\neq 0$.
In this case, for any tuple $(a_0,a_1,\dots,a_n) \in V_{\alpha}$, 
we have $(1,a_1/a_0,\dots,a_n/a_0)\in V_\alpha$ since the polynomials are homogeneous. The same argument as for the case $a_0=0$ shows that
we will have $(1,a_1/a_0,\dots,a_n/a_0)\in V$ for  $\alpha$ in a Zariski-dense set (we now apply  \cite[Proposition 1]{Koi00-circuits} to the polynomials
 $\tilde f_j(x_1,\dots,x_n)=f_j(1,x_1,\dots,x_n)$ and 
$\tilde g_i(x_1,\dots,x_n)=g_i(1,x_1,\dots,x_n)$). 
By appealing again to homogeneity, we can conclude that $(a_0,\dots,a_n)\in V$.

From the above analysis it follows that any matrix $\alpha$ belonging
to the intersection of two Zariski-dense sets (corresponding to the two cases
$a_0=0$ and $a_0 \neq 0$) is suitable. This concludes the proof since a finite intersection of Zariski-dense sets is Zariski-dense.\qed
\end{proof}

\end{document}